\def\amsbb{\use@mathgroup \M@U \symAMSb}
\tikzset{>=stealth}
\newcommand{\hordots}[4]
{
	\draw [fill, color=#4] (#1 + 0.25, #2) circle [radius=#3];
	\draw [fill, color=#4] (#1, #2) circle [radius=#3];
	\draw [fill, color=#4] (#1 - 0.25, #2) circle [radius=#3];
}
\definecolor{darkred}{RGB}{150, 0, 0}
\definecolor{darkblue}{RGB}{0, 0, 150}
\newcommand{\nbox}[2][9]{\hspace{#1pt} \mbox{#2} \hspace{#1pt}}
\newcommand{\sA}{\mathscr{A}}
\newcommand{\sB}{\mathscr{B}}
\newcommand{\cS}{\mathcal{S}}
\newcommand{\cP}{\mathcal{P}}
\newcommand{\cX}{\mathcal{X}}
\newcommand{\cY}{\mathcal{Y}}
\newcommand{\cZ}{\mathcal{Z}}
\newcommand{\ave}[1]{\langle #1 \rangle}
\renewcommand{\th}
{
^{\textnormal{th}}
}
\newtheorem{lem}{Lemma}[section]
\newtheorem{prop}{Proposition}[section]
\begin{document}

\title{Practical relativistic bit commitment}

\author{T.~Lunghi}
\affiliation{Group of Applied Physics, University of Geneva, Chemin de Pinchat 22, CH-1211 Gen\`eve 4, Switzerland}
\author{J.~Kaniewski}
\affiliation{Centre for Quantum Technologies, National University of Singapore, 3 Science Drive 2, Singapore 117543}
\affiliation{QuTech, Delft University of Technology, Lorentzweg 1, 2628 CJ Delft, Netherlands}
\author{F.~Bussi\`{e}res}
\affiliation{Group of Applied Physics, University of Geneva, Chemin de Pinchat 22, CH-1211 Gen\`eve 4, Switzerland}
\author{R.~Houlmann}
\affiliation{Group of Applied Physics, University of Geneva, Chemin de Pinchat 22, CH-1211 Gen\`eve 4, Switzerland}
\author{M.~Tomamichel}
\affiliation{Centre for Quantum Technologies, National University of Singapore, 3 Science Drive 2, Singapore 117543}
\affiliation{School of Physics, The University of Sydney, Sydney 2006, Australia}
\author{S.~Wehner}
\affiliation{Centre for Quantum Technologies, National University of Singapore, 3 Science Drive 2, Singapore 117543}
\affiliation{QuTech, Delft University of Technology, Lorentzweg 1, 2628 CJ Delft, Netherlands}
\author{H.~Zbinden}
\affiliation{Group of Applied Physics, University of Geneva, Chemin de Pinchat 22, CH-1211 Gen\`eve 4, Switzerland}
\date{\today}

\begin{abstract}
Bit commitment is a fundamental cryptographic primitive in which Alice wishes to commit a secret bit to Bob. Perfectly secure bit commitment between two mistrustful parties is impossible through asynchronous exchange of quantum information. Perfect security is however possible when Alice and Bob each split into several agents exchanging classical information at times and locations suitably chosen to satisfy specific relativistic constraints. In this Letter we first revisit a previously proposed scheme~\cite{crepeau11} that realizes bit commitment using only classical communication. We prove that the protocol is secure against quantum adversaries for a duration limited by the light-speed communication time between the locations of the agents. We then propose a novel multi-round scheme based on finite-field arithmetic that extends the commitment time beyond this limit, and we prove its security against classical attacks.
Finally, we present an implementation of these protocols using dedicated hardware and we demonstrate a 2 ms-long bit commitment over a distance of 131~km. By positioning the agents on antipodal points on the surface of the Earth, the commitment time could possibly be extended to $212$~ms.
\end{abstract}

\maketitle

Bit commitment is a fundamental primitive with several applications such as coin tossing~\cite{Blum1981a}, secure voting~\cite{Broadbent2008a}, contract signing or honesty-preserving auctions~\cite{Boneh2000a}. In a bit commitment protocol, Alice commits a secret bit to Bob which she can choose to reveal some time later. Security here means that if Alice is honest, then her bit is perfectly concealed from Bob until she decides to open the commitment and reveal her bit. Furthermore, if Bob is honest, then it should be impossible for Alice to change her mind once the commitment is made. That is, the only bit she can unveil is the one she originally committed herself to. Information-theoretically secure bit commitment in a setting where the two mistrustful parties exchange classical messages in an asynchronous fashion is impossible.  An extensive amount of work was devoted to study asynchronous quantum bit commitment, for which perfect security was ultimately shown to be impossible~\cite{Mayers1997a,Lo1997a,Dariano2007a,winkler2011a}. 
Note however that arbitrarily long commitments are possible if one makes the assumption that the quantum memory of the dishonest party is bounded~\cite{damgard05, damgard07} or noisy~\cite{wehner08a, konig12}.

Alternatively, bit commitment with split agents exchanging classical information was proposed as early as 1988~\cite{benor:bc}. Security against classical attacks was proved under the condition that no communication was possible between some of the agents. This protocol was later simplified~\cite{crepeau11}, and the new scheme called simplified-\texttt{BGKW}, \texttt{sBGKW}~\cite{benor:bc} was proven secure against classical and a restricted class of quantum attacks. The possibility of enforcing the no-communication condition using relativistic constraints on the timing of the classical communication was formulated in~\cite{Kent1999}. This later led to the proposal of relativistic protocols based on the exchange of quantum and classical information~\cite{Kent2011, Kent2012a}, which were proved to be secure against quantum adversaries~\cite{Kaniewski2013a,Croke2012a}. Such protocols were experimentally demonstrated recently~\cite{Lunghi2013a,Liu2014a}. However, the commitment time achievable using these protocols is fundamentally bounded by half the time required to send light signals between the remote agents, i.e.~at most $\sim 21$~ms if they are constrained to be on the surface of the Earth. 

The possibility of extending the commitment to an arbitrary duration was proposed in 1999~\cite{Kent1999}. It relies on positioning one agent of Alice $\sA_{1}$ near an agent of Bob $\sB_{1}$ at an agreed upon location, and similarly agents $\sA_{2}$ and $\sB_{2}$ at another location. 
Carefully timed classical communication between $\sA_{i}$ and $\sB_{i}$ allows Alice to commit to a bit that she later reveals at a time of her choosing. This requires several rounds of communication, and the amount of communication increases exponentially with the number of rounds making it impractical. This limitation was later mitigated, at least in principle, using a compression scheme that requires only a constant communication rate~\cite{Kent2005}. Security argument against classical adversaries presented in Ref.~\cite{Kent2005} is of asymptotic nature and, therefore, not sufficient for implementation purposes.

In this Letter, we first revisit the \texttt{sBGKW} bit commitment protocol~\cite{crepeau11} that uses classical communication only.
We show that successful cheating  is equivalent to winning a non-local game analyzed in Ref.~\cite{sikora14}, thereby proving the security of this protocol against quantum adversaries. To the best of our knowledge, this is the first entirely classical protocol to be proven secure against arbitrary quantum adversaries. To extend the duration of the commitment beyond the communication time between the locations of the agents (which constitutes the relativistic constraint in the \texttt{sBGKW} scheme), we introduce a novel multi-round scheme based on finite-field arithmetic and we prove its security against classical adversaries. Our scheme is simple and efficient and the security argument leads to a natural, algebraic problem for which we prove explicit and quantitative bounds (see Proposition B.2 in the Supplemental Material (SM)). Finally, we present practical implementations of both the \texttt{sBGKW} scheme and the multi-round variant, and show how this could be used to realize commitments of duration reaching up to $\sim 212$ milliseconds.

\paragraph*{Security definition}
We take $n \in \mathbb{N}$ to be the security parameter and we interpret $n$-bit strings as elements of the finite field $\mathbb{F}_{2^{n}}$ (for compactness we write $0$ to denote $0^{n} = 00 \ldots 0$). We denote addition by ``$\oplus$'' (in this case it is just the bitwise XOR) and multiplication by ``$*$''. Moreover, if $d$ is a bit and $b$ is an $n$-bit string then we define
\begin{equation*}
d \cdot b =
\begin{cases}
0 &\nbox[6]{if} d = 0,\\
b &\nbox[6]{if} d = 1.
\end{cases}
\end{equation*}
All secret strings used in the protocol are chosen uniformly at random from $\{0, 1\}^{n}$.

Let Alice (who makes the commitment) and Bob (who receives the commitment) have agents at two distinct locations ($\sA_{1}$ and $\sB_{1}$ at Location 1; $\sA_{2}$ and $\sB_{2}$ at Location~2) and let $d \in \{0, 1\}$ be the bit that honest Alice wants to commit to. The protocol consists of multiple rounds which alternate between the two locations and the timing is chosen such that every two consecutive rounds are space-like separated. Hence, no message sent during a certain round from one location can reach the other location in time for the next round.

Security for honest Alice is quantified by Bob's ability to guess her commitment \emph{immediately before} the open phase (assuming he might deviate arbitrarily from the honest protocol). All the protocols considered in this paper are~\emph{perfectly hiding}, which means that Bob remains completely ignorant about Alice's commitment (his guessing probability equals $\frac{1}{2}$).

Security for honest Bob is quantified through a scenario in which Alice performs an arbitrary action in the commit phase and is \emph{immediately after} challenged to open one of the bits. Given a particular strategy adopted by Alice in the commit phase we define $p_{d}$ to be the optimal probability of successfully unveiling~$d$. The protocol is $\varepsilon$-\textit{binding} if
\begin{equation*}
p_{0} + p_{1} \leq 1 + \varepsilon
\end{equation*}
for all strategies of dishonest Alice in the commit phase. Note that this is a weak, non-composable definition of security. In Appendix C we discuss how to formalize these definitions in the relativistic setting. (For a general overview see Ref.~\cite{Kaniewski2013a}.)
\paragraph*{Security of the \texttt{sBGKW} scheme}
We now present the scheme proposed in Ref.~\cite{crepeau11} and we prove its security against quantum adversaries. Before the protocol begins $\sA_{1}$ and $\sA_{2}$ must share a secret $n$-bit string $a$. Note that $\sB_{1}$ also needs a secret string $b$ but it can be generated before or during the protocol. The protocol consists of two rounds:
\begin{enumerate}
\item (commit) $\sB_{1}$ sends $b$ to $\sA_{1}$. $\sA_{1}$ returns $(d \cdot b) \oplus a$ to $\sB_{1}$.
\item (open) $\sA_{1}$ unveils the committed bit $d$ to $\sB_{1}$ while $\sA_{2}$ sends $a$ to $\sB_{2}$.
\end{enumerate}
To check whether the commitment should be accepted $\sB_{1}$ and $\sB_{2}$ need to communicate (e.g.~through an authenticated channel) and verify that the string returned by $\sA_{1}$ in the commit phase equals $(d \cdot b) \oplus a$.

Security for honest Alice comes from the fact that the only message that Bob receives in the commit phase is a uniformly random string.

Security for honest Bob in the classical case is fairly intuitive: in order for $\sA_{2}$ to be able to unveil both commitments she would need to know both $a$ and $a \oplus b$, hence, she would know $b$. However, since $b$ is chosen uniformly at random by Bob this must be difficult. This argument can be made rigorous~\cite{crepeau11} to show that the protocol is $\varepsilon$-binding for $\varepsilon = 2^{-n}$ (and this is actually tight: the trivial strategy of always outputting $0$ gives $p_{0} = 1$ and $p_{1} = 2^{-n}$). Unfortunately, this reasoning does not work against quantum adversaries since $\sA_{2}$ could have two distinct measurements that reveal $a$ and $a \oplus b$, respectively, but since they could be incompatible this would not have direct implications on her ability to guess $b$.

To find an explicit bound on $p_{0} + p_{1}$ we formulate cheating as a non-local game in which $\sA_{1}$ receives $b$, $\sA_{2}$ receives $d$ (the bit she is required to unveil) and the XOR of their outputs is supposed to equal $d \cdot b$. Winning such a game with probability $p_{\textnormal{win}}$ corresponds to a cheating strategy that achieves $p_{0} + p_{1} = 2 p_{\textnormal{win}}$. More concisely, the rules of the non-local game are~\cite{crepeau11}:
\begin{enumerate}
\item $\sA_{1}$ receives $b \in \{0, 1\}^{n}$, $\sA_{2}$ receives $d \in \{0, 1\}$ (both chosen uniformly at random).
\item $\sA_{1}$ outputs $a_{1} \in \{0, 1\}^{n}$, $\sA_{2}$ outputs $a_{2} \in \{0, 1\}^{n}$ and they win iff $a_{1} \oplus a_{2} = d \cdot b$.
\end{enumerate}
This game has been considered in Ref.~\cite{sikora14} under the name $\textnormal{CHSH}_{n}$ and it has been shown that
\begin{equation*}
p_{\textnormal{win}}(n) \leq \frac{1}{2} + \frac{1}{\sqrt{2^{n + 1}}},
\end{equation*}
which is sufficient for our purposes as it implies that
\begin{equation*}
p_{0} + p_{1} \leq 1 + \sqrt{2} \cdot 2^{- n / 2}
\end{equation*}
for all strategies of dishonest Alice. Therefore, the protocol is $\varepsilon$-binding with $\varepsilon = 2^{(1 - n) / 2}$ decaying exponentially in $n$ (but note that the decay rate is half of the decay rate against classical adversaries).

The two-round protocol is mapped onto a non-local game precisely because of the assumption of no communication. More specifically, we require that $\sA_{1}$ outputs the answer outside of the future of $\sA_{2}$ receiving the input and vice versa.
\paragraph*{A new multi-round protocol}
To extend the commitment time we propose a multi-round protocol and prove its security against classical adversaries. In principle, the commitment time can be made arbitrarily long. However, security depends on the number of rounds of the protocol, which is proportional to the length of the commitment. Therefore, the longer the commitment, the more resources (randomness and communication bandwidth) are required to achieve a given level of security.

Suppose that Alice and Bob want to execute the protocol with $m + 1$ rounds and we use $k$ as a label for the round under consideration. Then $\sA_{1}$ and $\sA_{2}$ must share $m$ secret strings denoted by $\{a_{k}\}_{k = 1}^{m}$. Similarly, Bob's agents need one secret string for every round denoted by $\{b_{k}\}_{k = 1}^{m}$ but, again, these can be generated locally during the protocol. All the rounds before the open phase ($1 \leq k \leq m$) have the same communication pattern: first $\sB_{i}$ sends an $n$-bit string to $\sA_{i}$ and then she replies with another $n$-bit string. In the last round $\sA_{i}$ sends $\sB_{i}$ a bit (her commitment) and an $n$-bit string (proof of her commitment). We will denote the $n$-bit string announced by Bob (Alice) in the $k\th$ round by $x_{k}$ ($y_{k}$) regardless of whether he/she is honest or not. The protocol is:
\begin{enumerate}
\item (commit, $k = 1$) $\sB_{1}$ sends $x_{1} = b_{1}$ to $\sA_{1}$. $\sA_{1}$ returns $y_{1} = d \cdot x_{1} \oplus a_{1}$.
\item (sustain, $2 \leq k \leq m$) $\sB_{i}$ sends $x_{k} = b_{k}$ to $\sA_{i}$. $\sA_{i}$ returns $y_{k} = (x_{k} * a_{k - 1}) \oplus a_{k}$.
\item (open, $k = m + 1$) $\sA_{i}$ sends $d$ and $y_{m + 1} = a_{m}$ to $\sB_{i}$.
\end{enumerate}
To check whether the commitment should be accepted $\sB_{1}$ and $\sB_{2}$ communicate and verify the following relation:
\begin{equation*}
\begin{aligned}
\label{eq:acceptance-condition}
y_{m + 1} \; = \; y_{m} \; \oplus \; b_{m} * y_{m - 1} \; \oplus \; b_{m} * b_{m - 1} * y_{m - 2} \; \oplus \; \ldots\\
\ldots \; \oplus \; b_{m} * b_{m - 1} * \ldots * b_{2} * y_{1} \; \oplus \; d \cdot b_{m} * b_{m - 1} * \ldots * b_{1}.
\end{aligned}
\end{equation*}

Security for honest Alice is a direct consequence of the fact that every message she announces is masked by a fresh secret $n$-bit string, which implies that the transcripts corresponding to $d = 0$ and $d = 1$ are statistically indistinguishable (see Proposition C.1 in the SM).

Proving security for honest Bob is a more challenging task, because we require security immediately after round $k = 1$. We first state the main result and then outline the idea behind the proof (for details refer to Sections B.2 and C.2 in the SM).
The multi-round protocol with $m + 1$ rounds is $\varepsilon$-binding for $\varepsilon = c_{m}$ defined as
\begin{equation} 
c_{m} =
\begin{cases}
2^{-n} &\nbox{for} m = 1,\\
\frac{1}{2^{n + 1}} + \sqrt{c_{m - 1}} &\nbox{for} m \geq 2.
\end{cases} \label{eq:boundmulti}
\end{equation} 
The security argument is conceptually simple: in the classical scenario the~\textit{sequential} cheating game in the multi-round protocol is \emph{equivalent} to a game in which multiple players act~\textit{in parallel} which allows us to disregard the causal structure of the protocol. We show that cheating in a protocol with $m + 1$ rounds is at least as difficult as winning the following $m$-player game. Let $X_{1}, X_{2}, \ldots, X_{m}$ be independent random variables drawn uniformly from the set of $n$-bit strings $\{0, 1\}^{n}$ and the $k\th$ player receives all the variables except for $X_{k}$ and outputs an $n$-bit string. The game is won if the XOR of the outputs equals $X_{1} * X_{2} * \ldots * X_{m}$. The bounds we obtain decay exponentially in $n/2^{m}$. This means that they become significantly weaker as the number of players increases, which ultimately limits the maximum number of rounds that can be implemented in practice. 
The tightness of these bounds is an interesting open problem and it is briefly discussed in Appendix B. Note that no explicit cheating strategy is known, whose winning probability would approach our security bounds.

\paragraph*{Implementation}
We implemented the two-round and the multi-round protocols described above.  
Each party has agents at two distinct locations: one at the Group of Applied Physics of the University of Geneva and one at the Institute of Applied Physics of the University of Berne. The straight-line distance between the two locations is $s=131$~km, corresponding to a time separation of 437~\textmu s. The hardware installed in Geneva is conceptually represented in FIG.~\ref{fig:setup}(a) and identical to the one in Berne. Each of the classical agents is a standalone computer equipped with a field-programmable gate array (FPGA) programmed to execute the necessary steps of the protocol. Each FPGA is synchronized to the Coordinated Universal Time (UTC) via a Global Positioning System clock (GPS clock), which consists of a GPS receiver and a Oven-Controlled Quartz-Crystal Oscillator (OCXO) generating a 10~MHz sinusoidal waveform. Through its GPS connection, the receiver outputs one electronic pulse per second (PPS), which is used to discipline the OCXO. The receiver is locked to the GPS signal with a time accuracy better than 150~ns. The 10~MHz signal generated by the OCXO is fed into the FPGA board and it is used to generate a 125~MHz signal using a phase-locked loop. This 125~MHz signal then serves as the time basis for the computations performed on the FPGA. The FPGA also receives the PPS signal to monitor the synchronization with the GPS clock. In particular, the number of cycles between two successive PPS signals is confirmed to be $125\times10^6$ plus or minus one, where each cycle corresponds to 8~ns. Therefore, the FPGA tolerates fluctuations up to 24~ns on the arrival time of the PPS synchronization signal. The GPS clock also provides the FPGA with a universal time stamp of every PPS signal, allowing Alice and Bob to locate their actions in time. 

\begin{figure}[!t]
\includegraphics[width= 6 cm]{./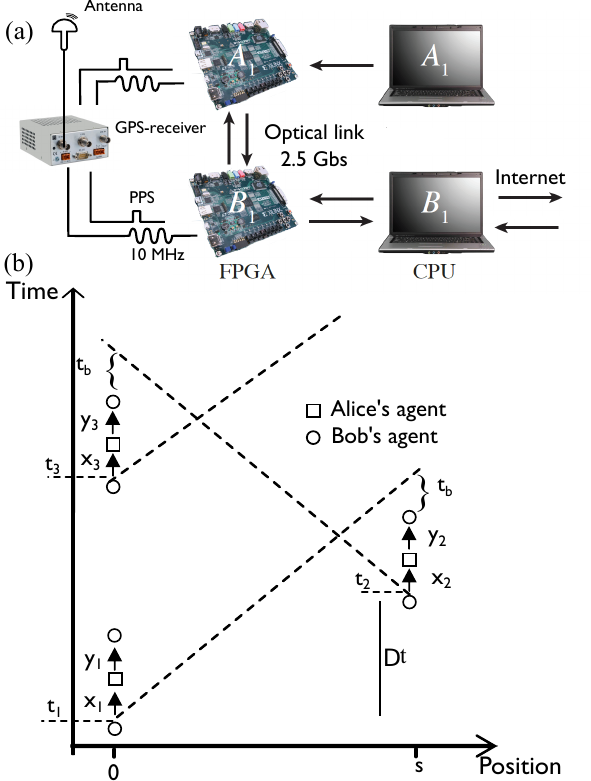}
\caption{(a) Experimental setup. (b) Space-time diagram of the experimental setup.}
\label{fig:setup}
\end{figure} 

Before either the two-round or the multi-round protocol starts, $\sA_{1}$ and $\sA_{2}$ (and similarly $\sB_{1}$ and $\sB_{2}$) share an appropriate number of random $n$-bit strings. At time $t_1$, which was agreed upon by both parties, $\sB_{1}$ sends the random string $x_1$ through the optical link. For a string of 512 bits communicated through the 2.5 Gbps optical link, this requires 205 ns. $\sA_{1}$'s FPGA then computes the string $y_1$ and sends it to $\sB_{1}$; see FIG.~\ref{fig:setup}(b). The relativistic constraint requires space-like separation between every two consecutive rounds, which means that the entire second round must be outside of the future light cone of the first bit of $x_1$ leaving the FPGA of $\sB_{1}$. The commitment begins when the last bit of $y_1$ is recorded by the FPGA of $\sB_{1}$. With $n=512$~bits, the security parameter of the two-round protocol is $\varepsilon \approx 10^{-77}$.

In the two-round protocol, $\sA_{2}$ unveils the commitment in the second round, at time $t_2 = t_1 + \Delta t$. She does so by sending the string $a_1$ to $\sB_{2}$, along with the committed bit~$d$. $\sB_{2}$ checks that the last bit of $a_1$ is received outside the future light cone of the beginning of the protocol. If this is the case, $\sB_{2}$ communicates $a_1$ and $d$ to $\sB_{1}$ through an authenticated channel. Finally, $\sB_{1}$ verifies that $y_1\oplus a_1 = d\cdot x_1$ and accepts the commitment. If the relativistic constraint is not respected, or if $\sB_{1}$'s verification fails, the protocol aborts.

In the multi-round protocol, $\sA_{1}$ and $\sA_{2}$ successively sustain the commitment until the last round. All rounds (except the first and last rounds) proceed as follows. Let us consider the $k^{\text{th}}$ round, with $k$ even (odd rounds are similar).  Between rounds $k$ and $k-2$, the string $x_k$ is loaded in the memory of $\sB_{2}$'s FPGA, and strings $a_{k-1}$ and $a_k$ are loaded in $\sA_{2}$'s FPGA. At time $t_k = t_1 + (k-1)\Delta t$,  $\sB_{2}$ communicates $x_k$ through the optical link. Then $\sA_{2}$ sustains the commitment by computing $y_k$ with the FPGA and sending it to $\sB_{2}$. The time between the communication of $x_k$ and the reception of $y_k$ is 6.1~\textmu s. $\sB_{2}$ checks that the reception of $y_k$ is outside the future light cone of the beginning of the communication between $\sB_{1}$ to $\sA_{1}$ that happened in round $k-1$. We used $\Delta t =  400 $~\textmu s (see Fig.~\ref{fig:setup}), which is 37~\textmu s shorter than the light-speed separation between the Berne and Geneva locations. Considering the 6.1~\textmu s, the absolute inaccuracies of the GPS clock ($\le 150$~ns), and the tolerance in the fluctuations of the synchronization signals ($\le 24$~ns) the round is completed $\approx 30.7$~\textmu s before the relativistic constraint expires. In the final $(m+1)^{\text{th}}$ round, $\sA_{1}$ (or $\sA_{2}$) opens the commitment at time $t_{m+1}$ by sending the string $a_{m+1}$ along with the committed bit~$d$. To verify the commitment, $\sB_{2}$ sends to $\sB_{1}$ all the strings communicated by $\sA_{2}$ through an authenticated channel. $\sB_{1}$ then checks if the commitment should be accepted as outlined above.
Authentication is based on an information-theoretic secure message-authenticator code which consists of a combination of polynomial hashing, and a strongly-universal family of hash functions~\cite{Carter1979}.

In the multi-round scheme, we aimed to maximize the number of rounds with a reasonable value for the security parameter $\varepsilon$. The limit of $n=512$~bits and $m+1=6$~rounds was ultimately set by the performance that we could achieve with the FPGA at our disposal. This yields a security parameter of $\varepsilon \approx 2.3\times 10^{-10}$. Using these parameters, we realized a commitment of 2~ms duration, which extends beyond the 437~\textmu s limit of the two-round protocol. Because synchronizing rounds over longer durations is a simple task for our hardware, it is straightforward to achieve significantly longer commitment times using more distant agents. For example, 150~ms could be easily achieved using Geneva and Singapore as the locations (these locations were used in our previous demonstration of quantum-relativistic bit commitment~\cite{Lunghi2013a}), while 212~ms could be achieved using antipodal locations on the Earth. 

\paragraph*{Summary}
We have shown that classical relativistic protocols allow us to implement information-theoretically secure commitment schemes in a straightforward fashion. 

The commitment scheme we implemented belongs to the class of timed commitments, i.e.~commitments that expire after a certain period of time. Even though they cannot be used to implement primitives whose security is required to hold forever (e.g.~oblivious transfer), they are known to have other important applications, e.g.~contract signing, honesty-preserving auctions or secure voting~\cite{Boneh2000a, Broadbent2008a} (see also Appendix~A).

For the \texttt{sBGKW} scheme we obtain an explicit, quantitative security bound by making a connection to a non-local game analyzed previously. We also propose a multi-round scheme which is secure against classical adversaries. We note that the number of rounds that we implemented here could have been higher using better optimized hardware. However, the scaling of the security bound with the number of rounds~\eqref{eq:boundmulti} prohibits a much larger number of rounds. An important problem is therefore to find a multi-round protocol whose security exhibits better scaling with the number of rounds, or, ideally, no dependence at all. This would allow us to obtain longer (or maybe even arbitrarily long) commitments while only using simple, commercially available digital devices.

\paragraph*{Acknowledgments:} We thank Mohammad Bavarian, Gilles Brassard, Iordanis Kerenidis, Raghav Kulkarni, Troy Lee, Laura Man\v{c}inska, Miklos Santha and Sarvagya Upadhyay for useful discussions. JK especially thanks Igor Shparlinski for sharing his ideas about Proposition B.2 of the SM and subsequent discussions. We thank Andr\'{e} Stefanov and Daniel Weber for helping to install the setup in Berne. JK, MT and SW are funded by the Ministry of Education (MOE) and National Research Foundation Singapore, as well as MOE Tier 3 Grant "Random numbers from quantum processes" (MOE2012-T3-1-009). Financial support is provided by the Swiss NCCR QSIT.

\paragraph*{Author contributions:} TL is the first experimental author and JK is the first theoretical author.
\nocite{buhrman05}
\nocite{bavarian13}
\nocite{ford13}
\bibliographystyle{apsrev4-1}
\bibliography{qcrypt2014_v2}
\onecolumngrid
\appendix
\section{Preliminaries}
\subsection{How useful is a relativistic bit commitment protocol?} The commitment scheme we implement belongs to the class of timed commitments, i.e.~commitments that are only valid for a period of time but then ultimately \emph{expire}. Such commitments cannot be used in reductions implementing primitives whose security is required to hold forever (e.g. oblivious transfer or secure function evaluation) but they are known to have other applications. For example, Boneh and Naor~\citep{Boneh2000a} study commitments which after some fixed $\rm{\Delta}$t automatically open, i.e.~the committed value is revealed. They show that such commitments can be used for contract signing or honesty-preserving auctions. In our case after $\rm{\Delta}$t the commitment simply vanishes, i.e.~Bob should not accept any opening (because Alice could unveil either bit with unit probability) and the originally committed value (if there was one) remains secret. Therefore, it gives more power to the committer by giving her the freedom not to open the commitment and, hence, protect her privacy. Generally speaking, such temporary secrecy is sufficient if the goal is not to preserve secrecy forever but to force parties to act simultaneously (in the sense that their respective actions should not depend on each other) even if the communication model is sequential. Our commitment might be particularly useful for multi-party protocols which are robust against a certain fraction of dishonest parties (then we would simply call dishonest any party that refuses to open the commitment). A prime application of this type would be the task of secure voting as presented in Ref.~\cite{Broadbent2008a}.

\subsection{Notation}
Let $[n] = \{1, 2, \ldots, n\}$. Generally, we use uppercase letters for random variables and lowercase letters to denote particular values. For $j, k \in \cS$ we use $\sum_{j \neq k}$ as shorthand notation for $\sum_{j \in \cS} \sum_{k \in \cS \setminus \{j\}}$.
\subsection{The Cauchy-Schwarz inequality for probabilities}
Let $X$ be a random variable distributed uniformly over $[n]$ and let $\{E_{j}\}_{j \in [m]}$ be a family of events defined on $X$.
\begin{lem}
\label{lem:cauchy-schwarz}
Let $p$ be the average probability of the family of events
\begin{equation*}
p := \frac{1}{m} \sum_{j \in [m]} \Pr[ E_{j} ]
\end{equation*}
and $c$ be the cumulative size of the pairwise intersections
\begin{equation*}
c := \sum_{j \neq k} \Pr[ E_{j} \wedge E_{k} ].
\end{equation*}
Then the following inequality holds
\begin{equation*}
p \leq \frac{ 1 + \sqrt{1 + 4c} }{2m}.
\end{equation*}
\end{lem}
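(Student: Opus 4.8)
The plan is to repackage the whole family of events into a single integer-valued random variable and then apply Cauchy--Schwarz in the guise of ``the second moment dominates the square of the first moment.'' Concretely, I would introduce the counting variable
\begin{equation*}
N := \sum_{j \in [m]} \mathbf{1}_{E_{j}},
\end{equation*}
where $\mathbf{1}_{E_{j}}$ is the indicator of the event $E_{j}$, so that $N$ records how many of the events occur for a given value of $X$. The point is that the two quantities appearing in the statement, $p$ and $c$, are exactly the first two moments of $N$ in disguise.

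The first step is to verify this moment identification. Linearity of expectation immediately gives $\mathbb{E}[N] = \sum_{j \in [m]} \Pr[E_{j}] = m p$. For the second moment I would expand the square and use $\mathbf{1}_{E_{j}}^{2} = \mathbf{1}_{E_{j}}$ together with $\mathbf{1}_{E_{j}} \mathbf{1}_{E_{k}} = \mathbf{1}_{E_{j} \wedge E_{k}}$, which separates the diagonal from the off-diagonal terms and yields $\mathbb{E}[N^{2}] = \sum_{j \in [m]} \Pr[E_{j}] + \sum_{j \neq k} \Pr[E_{j} \wedge E_{k}] = m p + c$.

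The second step is to invoke the inequality $\mathbb{E}[N]^{2} \leq \mathbb{E}[N^{2}]$ --- equivalently $\operatorname{Var}(N) \geq 0$, or Cauchy--Schwarz applied to $N$ and the constant function $1$ (which is precisely why this lemma deserves its name). Substituting the two moments turns this into the purely algebraic inequality $(m p)^{2} \leq m p + c$. Finally I would read off the bound by viewing this as a quadratic inequality in the nonnegative quantity $u := m p$: from $u^{2} - u - c \leq 0$ and $u \geq 0$ we get $u \leq \tfrac{1}{2}\bigl(1 + \sqrt{1 + 4c}\bigr)$, and dividing through by $m$ gives exactly the claimed estimate. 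I do not expect a genuine obstacle here; the only real decision is to introduce the counting variable $N$, after which the argument reduces to two elementary moment computations and solving a quadratic. It is worth noting in passing that uniformity of $X$ is never actually used --- the same proof goes through for any probability distribution on the sample space --- so the hypothesis merely fixes a concrete measure.
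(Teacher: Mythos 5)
Your proof is correct and is essentially the paper's own argument in probabilistic dress: the paper encodes each event as a vector with entries $1/\sqrt{n}$, applies Cauchy--Schwarz to $\sum_j s_j$ against the uniform vector, and arrives at the identical quadratic $(mp)^2 \leq mp + c$, which is exactly your $\mathbb{E}[N]^2 \leq \mathbb{E}[N^2]$ for the counting variable $N = \sum_j \mathbf{1}_{E_j}$. Your closing remark that uniformity of $X$ is never needed is accurate and a nice (if minor) observation, but the route is the same.
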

\begin{proof}
Each event can be represented by an $n$-dimensional, real vector whose entries are labelled by the possible values that $X$ can take. If a particular value of $X$ belongs to the event, we set the corresponding component to $1/\sqrt{n}$ and if it does not we set it to $0$
\begin{equation*}
[ s_{j} ]_{x} =
\begin{cases}
\frac{1}{\sqrt{n}} &\nbox[6]{if} x \in E_{j},\\
0 &\nbox[6]{otherwise.}
\end{cases}
\end{equation*}
Moreover, let $n$ be the normalised, uniform vector: $[n]_{x} = 1/\sqrt{n}$ for all $x \in [n]$. It is straightforward to check that with these definitions we have
\begin{equation*}
\Pr[ E_{j} ] = \ave{s_{j}, n} = \ave{s_{j}, s_{j}} \nbox[8]{and} \Pr[ E_{j} \wedge E_{k} ] = \ave{s_{j}, s_{k}}.
\end{equation*}
Clearly, we have $\ave{s_{j}, s_{k}} \geq 0$. Due to linearity of the inner product we have
\begin{equation*}
p = \frac{1}{m} \sum_{j \in [m]} \Pr[ E_{j} ] = \frac{1}{m} \sum_{j} \ave{s_{j}, n} = \frac{1}{m} \ave{ \sum_{j} s_{j}, n},
\end{equation*}
which can be upper bounded using the Cauchy-Schwarz inequality. Since $\ave{n, n} = 1$ we have
\begin{equation*}
\ave{ \sum_{j} s_{j}, n}^{2} \leq \sum_{j k} \ave{s_{j}, s_{k}} = \sum_{j} \ave{s_{j}, s_{j}} + \sum_{j \neq k} \ave{s_{j}, s_{k}} = m p + c,
\end{equation*}
which gives the following quadratic constraint
\begin{equation*}
p^{2} \leq \frac{p}{m} + \frac{c}{m^{2}}.
\end{equation*}
Solving for $p$ gives the desired bound.
\end{proof}
\section{Finite-field multiplication in the ``Number on the Forehead'' model}
We introduce a family of multiplayer games which are a natural generalisation of the two-player family introduced in Ref.~\cite{buhrman05} and generalised in Ref.~\cite{bavarian13}. Since these games rely on finite-field arithmetic we first state some basic properties of finite fields, then we define the game and show that finding the optimal winning probability corresponds to a natural algebraic problem concerning multivariate polynomials over finite fields. Finally, we prove upper bounds on the optimal winning probability and discuss their tightness.
\subsection{Finite-field arithmetic}
Let $\amsbb{F}_{q}$ denote the finite field of order $q = p^{k}$ ($p$ is a prime and $k$ is an integer) and let $0$ denote the zero element of $\amsbb{F}_{q}$. Operations over finite-field satisfy the following properties
\begin{enumerate}
\item Multiplication by zero gives zero\\
$x \cdot 0 = 0 \quad \forall x \in \amsbb{F}_{q}$
\item Multiplication is distributive over addition\\
$x (y + z) = (x y) + (x z) \quad \forall x, y, z \in \amsbb{F}_{q}$
\end{enumerate}
\subsection{Definition of the game}
\label{sec:multiplayer-game}
\begin{figure}
\begin{tikzpicture}[scale=1, line width=0.5]
\draw (-0.4, -0.4) rectangle (0.4, 0.4);
\node at (0, 0) {$\cP_{1}$};
\draw[-] (1.5, 1.5) -- (1.5, -1.5);
\draw (2.6, -0.4) rectangle (3.4, 0.4);
\node at (3, 0) {$\cP_{2}$};
\draw[-] (4.5, 1.5) -- (4.5, -1.5);
\hordots{5.5}{0}{0.05}{black}
\draw[-] (6.5, 1.5) -- (6.5, -1.5);
\draw (7.6, -0.4) rectangle (8.4, 0.4);
\node at (8, 0) {$\cP_{m}$};
\node at (0, 1.4) {$X_{2}, X_{3}, \ldots, X_{m}$};
\draw [->] (0, 1.1) to (0, 0.6);
\node at (3, 1.4) {$X_{1}, X_{3}, \ldots, X_{m}$};
\draw [->] (3, 1.1) to (3, 0.6);
\node at (8, 1.4) {$X_{1}, X_{2}, \ldots, X_{m - 1}$};
\draw [->] (8, 1.1) to (8, 0.6);
\node at (0, -1.4) {$Y_{1}$};
\draw [->] (0, -0.6) to (0, -1.1);
\node at (3, -1.4) {$Y_{2}$};
\draw [->] (3, -0.6) to (3, -1.1);
\node at (8, -1.4) {$Y_{m}$};
\draw [->] (8, -0.6) to (8, -1.1);
\end{tikzpicture}
\caption{In the ``Number on the Forehead'' model there are $m$ inputs $X_{1}, X_{2}, \ldots, X_{m}$ and $\cP_{k}$ (the $k\th$ player out of $m$) receives all the inputs except for $X_{k}$. We denote the output of $\cP_{k}$ by $Y_{k}$. Vertical lines remind us that no communication between the players is allowed.}
\label{fig:number-on-the-forehead}
\end{figure}
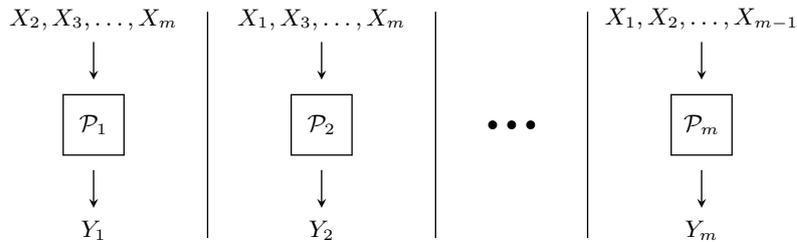
Consider a one-round game with $m$-players denoted by $\cP_{1}, \cP_{2}, \ldots, \cP_{m}$. With every player we associate an \emph{input} and an \emph{output}, e.g.~for $\cP_{k}$ these are denoted by $X_{k}$ and $Y_{k}$, respectively. Let each of $X_{1}, X_{2}, \ldots, X_{m}$ be drawn independently, uniformly at random from $\amsbb{F}_{q}$. In the ``Number on the Forehead'' model $\cP_{k}$ receives \emph{all the inputs except for the $k\th$ one} (denoted by $X_{[m] \setminus \{k\}}$) as shown in FIG.~\ref{fig:number-on-the-forehead}. Each player is required to output an element of $\amsbb{F}_{q}$ (denoted by $Y_{k}$) and the game is won if
\begin{equation}
\label{eq:winning-condition}
\prod_{k = 1}^{m} X_{k} = \sum_{k = 1}^{m} Y_{k}.
\end{equation}
In the classical setting the optimal winning probability can be achieved when each player adopts a deterministic strategy, i.e.~a function $f : \amsbb{F}_{q}^{(m - 1)} \to \amsbb{F}_{q}$. If $\cP_{k}$ employs a strategy represented by $f_{k}$, i.e.~he outputs $Y_{k} = f_{k} ( X_{ [m] \setminus \{k\} } )$, then the winning probability equals
\begin{equation*}
\omega_{m}(f_{1}, f_{2}, \ldots, f_{m}) := \Pr \big[ \prod_{k = 1}^{m} X_{k} = \sum_{k = 1}^{m} f_{k}( X_{ [m] \setminus \{k\} } ) \big]
\end{equation*}
and we define $\omega_{m}$ to be the optimal winning probability
\begin{equation}
\label{eq:omega-definition}
\omega_{m} := \max_{ f_{1}, f_{2}, \ldots, f_{m} } \omega_{m}(f_{1}, f_{2}, \ldots, f_{m}),
\end{equation}
where the maximisation is taken over all functions from $\amsbb{F}_{q}^{(m - 1)}$ to $\amsbb{F}_{q}$.
\subsection{Characterisation through multivariate polynomials over a finite field}
First, note that since the probability distribution of inputs is uniform then the winning probability is proportional to the number of inputs $(x_{1}, x_{2}, \ldots, x_{m})$ on which the condition~\eqref{eq:winning-condition} is satisfied
\begin{equation*}
\prod_{k = 1}^{m} x_{k} = \sum_{k = 1}^{m} f_{k}( x_{ [m] \setminus \{k\} } ).
\end{equation*}
Alternatively, the winning probability can be deduced by counting the number of zeroes of the following function
\begin{equation*}
P(x_{1}, x_{2}, \ldots, x_{m}) = \prod_{k = 1}^{m} x_{k} - \sum_{k = 1}^{m} f_{k}( x_{ [m] \setminus \{k\} } ).
\end{equation*}
By the Lagrange interpolation method every function from $\amsbb{F}_{q}^{n} \to \amsbb{F}_{q}$ (for arbitrary $n \in \amsbb{N}$) can be written as a polynomial. Therefore, the question concerns the number of zeroes of the polynomial $P$. Different strategies employed by the players give rise to different polynomials and we need to characterise what polynomials are ``reachable'' in this model. The output of $\cP_{k}$ is an arbitrary polynomial of $x_{ [m] \setminus \{k\}}$, hence, it only contains terms that depend on \emph{at most $m - 1$ variables}. This means that the part of $P$ that depends on \emph{all $m$ variables} comes solely from the first term and equals $\prod_{k = 1}^{m} x_{k}$. Therefore, finding the optimal winning probability of the game is equivalent to finding the polynomial with the largest number of zeroes, whose only term that depends on all $m$ variables equals $\prod_{k = 1}^{m} x_{k}$. This reduces the problem of finding the optimal strategy to a purely algebraic problem about properties of polynomials over finite fields.
\subsection{A recursive upper bound on the optimal winning probability}
\label{sec:recursive-upper-bound}
Here, we show how to find explicit upper bounds on $\omega_{m}$ through an induction argument. First, note that for $m = 1$ there is only one term on the right-hand side of Eq.~\eqref{eq:winning-condition} and since this term takes no arguments it is actually a constant. Since $X_{1}$ is uniform we have
\begin{equation*}
\omega_{1} := \max_{ c \in \amsbb{F}_{q} } \Pr [ X_{1} = c ] = \frac{1}{q}.
\end{equation*}

Now, we show how to prove an upper bound on $\omega_{m}$ in terms of $\omega_{m - 1}$. For a fixed strategy $\{f_{1}, f_{2}, \ldots, f_{m}\}$ the winning probability can be written as
\begin{align*}
\omega_{m}(f_{1}, f_{2}, \ldots, f_{m}) &= \Pr \big[ X_{1} X_{2} \ldots X_{m} = \sum_{k = 1}^{m} f_{k}( X_{ [m] \setminus \{k\} } ) \big]\\
&= \sum_{y \in \amsbb{F}_{q}} \Pr[X_{m} = y] \cdot \Pr \big[ X_{1} X_{2} \ldots X_{m} = \sum_{k = 1}^{m} f_{k}( X_{ [m] \setminus \{k\} } ) | X_{m} = y \big]\\
&= q^{-1} \sum_{y} \Pr \big[ X_{1} X_{2} \ldots X_{m} = \sum_{k = 1}^{m} f_{k}( X_{ [m] \setminus \{k\} } ) | X_{m} = y \big].
\end{align*}
Conditioning on a particular value of $X_{m}$ leads to events that only depend on $X_{1}, X_{2}, \ldots, X_{m - 1}$. In particular, we can define for $X_{m} = y$ the event $F_{y}$
\begin{equation*}
F_{y} \iff X_{1} X_{2} \ldots X_{m - 1} y = \sum_{k = 1}^{m - 1} f_{k}( X_{ [m - 1] \setminus \{k\} }, y ) + f_{m}(X_{[m - 1]}),
\end{equation*}
which satisfies
\begin{equation}
\label{eq:fy-definition}
\Pr[ F_{y} ] = \Pr \big[ X_{1} X_{2} \ldots X_{m} = \sum_{k = 1}^{m} f_{k}( X_{ [m] \setminus \{k\} } ) | X_{m} = y \big].
\end{equation}
We can use Lemma~\ref{lem:cauchy-schwarz} to find a bound on $\omega_{m}(f_{1}, f_{2}, \ldots, f_{m}) = q^{-1} \sum_{y} \Pr[ F_{y} ]$ as long as we are given bounds on $\Pr[ F_{y} \wedge F_{z}]$ for $y \neq z$.
\begin{prop}
\label{prop:intersections}
For $y \neq z$ we have $\Pr[ F_{y} \wedge F_{z}] \leq \omega_{m - 1}$.
\end{prop}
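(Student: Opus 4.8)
The plan is to show that the conjunction $F_y \wedge F_z$ forces the variables $X_1, X_2, \ldots, X_{m-1}$ to satisfy a winning condition for the $(m-1)$-player game, and then bound its probability directly by $\omega_{m-1}$. First I would write out the two defining relations explicitly: the event $F_y$ is the relation $y \cdot X_1 X_2 \ldots X_{m-1} = \sum_{k=1}^{m-1} f_k(X_{[m-1]\setminus\{k\}}, y) + f_m(X_{[m-1]})$, and $F_z$ is the identical relation with $y$ replaced by $z$ everywhere except in $f_m(X_{[m-1]})$, which does not depend on the conditioned value.

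The key step, and the crux of the whole argument, is to subtract the two relations. The term $f_m(X_{[m-1]})$ is common to both and therefore cancels, leaving $(y-z) \cdot X_1 X_2 \ldots X_{m-1} = \sum_{k=1}^{m-1} \big[ f_k(X_{[m-1]\setminus\{k\}}, y) - f_k(X_{[m-1]\setminus\{k\}}, z) \big]$. Because $y \neq z$, the scalar $y - z$ is a nonzero element of $\amsbb{F}_q$ and hence invertible; multiplying through by $(y-z)^{-1}$ brings the relation into the standard form $X_1 X_2 \ldots X_{m-1} = \sum_{k=1}^{m-1} g_k(X_{[m-1]\setminus\{k\}})$, where I define $g_k := (y-z)^{-1} \big[ f_k(\,\cdot\,, y) - f_k(\,\cdot\,, z) \big]$.

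The remaining point, which deserves a careful check, is the variable dependence of the $g_k$. Each $g_k$ is a function of $X_{[m-1]\setminus\{k\}}$ only, i.e.~of the $m-2$ variables available to player $\cP_k$ in the smaller game, precisely because fixing the last argument of $f_k$ to the constant $y$ (or $z$) eliminates its dependence on $X_m$. Thus $(g_1, g_2, \ldots, g_{m-1})$ is a \emph{legitimate} strategy for the $(m-1)$-player ``Number on the Forehead'' game played over $X_1, X_2, \ldots, X_{m-1}$. Consequently, whenever both $F_y$ and $F_z$ occur, this strategy wins that game, so $\Pr[F_y \wedge F_z] \leq \omega_{m-1}(g_1, g_2, \ldots, g_{m-1}) \leq \omega_{m-1}$ by the definition~\eqref{eq:omega-definition} of the optimal winning probability.

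I expect the main obstacle to be conceptual rather than computational: recognizing that the cancellation of the shared term $f_m$ together with the invertibility of $y-z$ is exactly what collapses the $m$-player structure down to a genuine $(m-1)$-player instance. Once that reduction is spotted, the bound follows immediately and everything else is routine bookkeeping about which variables each function depends on.
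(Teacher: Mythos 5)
Your proof is correct and follows essentially the same route as the paper: subtract the two relations so that the shared term $f_m(X_{[m-1]})$ cancels, divide by the invertible element $y-z$, and observe that the resulting relation is the winning condition of an $(m-1)$-player instance with the legitimate strategy $g_k$. The paper phrases the last step as a relaxation to arbitrary functions $g_k(X_{[m-1]\setminus\{k\}}, y, z)$, but this is the same argument.
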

\begin{proof}
Eq.~\eqref{eq:fy-definition} defines $F_{y}$ through a certain equation in the finite field. If the equations corresponding to $F_{y}$ and $F_{z}$ are satisfied simultaneously then clearly any linear combination of these equations is also satisfied. More specifically, we define a new event
\begin{equation}
\label{eq:gyz-definition}
G_{yz} \iff X_{1} X_{2} \ldots X_{m - 1} (y - z) = \sum_{k = 1}^{m - 1} f_{k}( X_{ [m - 1] \setminus \{k\} }, y ) - f_{k}( X_{ [m - 1] \setminus \{k\} }, z )
\end{equation}
and since $F_{y} \wedge F_{z} \implies G_{yz}$ we are guaranteed that $\Pr[F_{y} \wedge F_{z}] \leq \Pr[G_{yz}]$.

To find an upper bound on $\Pr[ G_{yz} ]$ we give the players more power by allowing a more general expression on the right-hand side. In Eq.~\eqref{eq:gyz-definition} the $k\th$ term is a function of $X_{ [m - 1] \setminus \{k\} }, y$ and $z$, so let us replace it by an arbitrary function of these variables
\begin{equation*}
f_{k}( X_{ [m - 1] \setminus \{k\} }, y ) - f_{k}( X_{ [m - 1] \setminus \{k\} }, z ) \quad \to \quad g_{k}( X_{ [m - 1] \setminus \{k\} }, y, z ).
\end{equation*}
Under this relaxation, we arrive at the following equality
\begin{equation*}
X_{1} X_{2} \ldots X_{m - 1} (y - z) = \sum_{k = 1}^{m - 1} g_{k}( X_{ [m - 1] \setminus \{k\} }, y, z ).
\end{equation*}
Clearly, $(y - z)$ is a constant (non-zero) multiplicative factor known to each player. Dividing the equation through by $(y - z)$ leads to the same game as considered before but the number of players has decreased by one: there are only $m - 1$ players now. Therefore,
\begin{equation*}
\Pr[ F_{y} \wedge F_{z} ] \leq \Pr[ G_{yz} ] \leq \omega_{m - 1}.
\end{equation*}
\end{proof}
Now, we can state and prove our main technical result.
\begin{prop}
The optimal winning probability of the game defined in~\eqref{eq:omega-definition} satisfies the following recursive relation
\begin{equation}
\label{eq:recursive-bound}
\omega_{m} \leq \frac{1 + \sqrt{1 + 4 q ( q - 1 ) \omega_{m - 1} } }{ 2 q }.
\end{equation}
\end{prop}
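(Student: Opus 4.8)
The plan is to apply Lemma~\ref{lem:cauchy-schwarz} directly to the family of events $\{F_{y}\}_{y \in \amsbb{F}_{q}}$, using Proposition~\ref{prop:intersections} to control the pairwise intersections. First I would fix an optimal strategy $\{f_{1}, f_{2}, \ldots, f_{m}\}$, so that $\omega_{m} = \omega_{m}(f_{1}, f_{2}, \ldots, f_{m})$; this maximum is attained since there are only finitely many deterministic strategies. The crucial observation is that, after conditioning on $X_{m} = y$, all the events $F_{y}$ are defined on the \emph{same} underlying random variable $(X_{1}, X_{2}, \ldots, X_{m - 1})$, which is uniform over $\amsbb{F}_{q}^{m - 1}$. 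Hence the hypotheses of Lemma~\ref{lem:cauchy-schwarz} are met, with the role of the ground set $[n]$ played by $\amsbb{F}_{q}^{m - 1}$ (of size $q^{m - 1}$) and the role of the index set $[m]$ played by $\amsbb{F}_{q}$ (of size $q$).

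With this identification the average probability appearing in the lemma is exactly
\begin{equation*}
p = \frac{1}{q} \sum_{y \in \amsbb{F}_{q}} \Pr[ F_{y} ] = \omega_{m}(f_{1}, f_{2}, \ldots, f_{m}) = \omega_{m},
\end{equation*}
so the lemma yields $\omega_{m} \leq (1 + \sqrt{1 + 4 c})/(2 q)$, where $c = \sum_{y \neq z} \Pr[ F_{y} \wedge F_{z} ]$ is the cumulative size of the pairwise intersections.

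Next I would bound $c$. By Proposition~\ref{prop:intersections} every summand satisfies $\Pr[ F_{y} \wedge F_{z} ] \leq \omega_{m - 1}$ for $y \neq z$, and since there are exactly $q(q - 1)$ ordered pairs $(y, z)$ with $y \neq z$, I obtain $c \leq q(q - 1)\, \omega_{m - 1}$. Because the right-hand side $(1 + \sqrt{1 + 4 c})/(2q)$ is monotonically increasing in $c$, I may substitute this upper bound to conclude
\begin{equation*}
\omega_{m} \leq \frac{ 1 + \sqrt{1 + 4 q ( q - 1 )\, \omega_{m - 1}} }{ 2 q },
\end{equation*}
which is exactly the claimed relation~\eqref{eq:recursive-bound}.

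I do not anticipate a genuine obstacle, since the two technical ingredients — the Cauchy-Schwarz inequality for probabilities and the intersection bound — are already established; the remaining points require only care. In particular, one must not confuse the number of players $m$ with the number of events $q$ when invoking Lemma~\ref{lem:cauchy-schwarz}, as it is $q$ (not $m$) that plays the role of the index-set cardinality. One should also note that the lemma's nonnegativity requirement $\ave{s_{j}, s_{k}} \geq 0$ holds automatically here, because these inner products are probabilities of intersections of events. Finally, the monotonicity step that licenses replacing the exact $c$ by its upper bound $q(q-1)\,\omega_{m-1}$ deserves to be stated explicitly rather than glossed over, as it is what turns the lemma's bound in terms of the unknown $c$ into a clean recursion in $\omega_{m-1}$.
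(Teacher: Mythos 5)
Your proposal is correct and follows essentially the same route as the paper: the paper's own proof is a one-line combination of Lemma~\ref{lem:cauchy-schwarz} with Proposition~\ref{prop:intersections}, relying on the decomposition $\omega_{m}(f_{1},\ldots,f_{m}) = q^{-1}\sum_{y}\Pr[F_{y}]$ already set up in the surrounding text, and you have simply made explicit the same identifications (index set of size $q$, $q(q-1)$ ordered pairs, monotonicity in $c$). No gaps.
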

\begin{proof}
The statement follows directly from combining Lemma~\ref{lem:cauchy-schwarz} with Proposition~\ref{prop:intersections}.
\end{proof}
Since we know that $\omega_{1} = q^{-1}$, we can obtain a bound on $\omega_{m}$ by recursive evaluation of Eq.~\eqref{eq:recursive-bound}. More precisely, we get $\omega_{m} \leq c_{m}$ for
\begin{equation}
\label{eq:cm-definition}
c_{m} =
\begin{cases}
q^{-1} &\nbox{for} m = 1,\\
\frac{1 + \sqrt{1 + 4 q ( q - 1 ) c_{m - 1} } }{ 2 q } &\nbox{for} m \geq 2.
\end{cases}
\end{equation}
Note that this bound is always non trivial, i.e.~$c_{m} < 1$ for all values of $q$ and $m$. To obtain a slightly weaker but simpler form presented as Eq.~(1) in main text we note that $1 - 4 q c_{m - 1} \leq 0$ and set $q = 2^{n}$.
\subsection{Discussion}
Having proved an explicit upper bound on $\omega_{m}$ we would like to investigate its tightness. It can be shown that in the regime interesting from the cryptographic point of view ($q \gg 1$) the leading behaviour of $c_{m}$ is
\begin{equation}
\label{eq:cm-asymptotic}
c_{m} \propto q^{-2^{-m}}.
\end{equation}
In other words, $c_{m}$ decays exponentially in $q$ but the value of the exponent depends on the number of players: every time we add a player we lose half of the decay exponent. This might seem unexpectedly weak but it has recently been shown (Theorem 6.5 in Ref.~\cite{bavarian13}) that
\begin{equation*}
\omega_{2} = \Omega(q^{- \frac{2}{3}}).
\end{equation*}
In fact, for $q = p^{k}$ where $k$ is even it can be improved (Theorem 1.3 in Ref.~\cite{bavarian13}) to give
\begin{equation*}
\omega_{2} = \Omega(q^{- \frac{1}{2}}).
\end{equation*}
This shows that for $m = 2$ the asymptotic decay of $q^{-1/2}$ is the best we can hope for (at least for an upper bound that holds for both odd and even values of $q$). Moreover, as far as we know, the best explicit upper bound on $\omega_{2}$ is the quantum upper bound (Theorem 1.2 in Ref.~\cite{bavarian13})
\begin{equation*}
c_{2}^{\textnormal{qm}} = \frac{1}{q} + \frac{q - 1}{q} \frac{1}{ \sqrt{q} }.
\end{equation*}
On the other hand, evaluating $c_{2}$ according to Eq.~\eqref{eq:cm-definition} gives
\begin{equation*}
c_{2} = \frac{ \sqrt{1 - \frac{3}{4q} }}{ \sqrt{q} } + \frac{1}{2q}.
\end{equation*}
By noting that
\begin{gather*}
\sqrt{1 - \frac{3}{4q} } \leq 1 - \frac{3}{8q} \nbox{for} q \geq \frac{3}{4} \nbox{and}\\
\frac{1}{\sqrt{q}} - \frac{3}{8q \sqrt{q}} + \frac{1}{2q} \leq \frac{1}{\sqrt{q}} + \frac{1}{q} - \frac{1}{q \sqrt{q}} \nbox{for} q \geq \frac{25}{16}
\end{gather*}
we conclude that our bound is strictly tighter, $c_{2} < c_{2}^{\textnormal{qm}}$, for all $q \geq 2$. (Note that since $c_{2}$ only applies to classical strategies, this comparison has no implications on the tightness of $c_{2}^{\textnormal{qm}}$ in the quantum setting.)

To close the discussion, let us mention that the ``Number on the Forehead'' model has been extensively studied in the communication complexity setting. In fact, for certain Boolean functions related to finite-field multiplication (in finite fields of characteristic $2$, i.e.~$q = 2^{n}$) a lower bound of the form $\Omega(n / 2^{m})$ was recently shown \cite{ford13}. As it appears strikingly similar to~\eqref{eq:cm-asymptotic} it would be interesting to investigate whether the two scenarios can be related in a rigorous way.
\section{Relativistic bit commitment protocols}
All the $n$-bit strings that appear in the protocols below should be interpreted as elements of the finite field $\amsbb{F}_{2^{n}}$. The addition and multiplication are denoted by $\oplus$ and $*$, respectively. Note that the addition is exactly the bitwise XOR but the multiplication \emph{does not} correspond to taking bitwise AND. Moreover, if $d$ is a bit and $b$ is an $n$-bit string we define
\begin{equation*}
d \cdot b =
\begin{cases}
0 &\nbox[6]{if} d = 0,\\
b &\nbox[6]{if} d = 1.
\end{cases}
\end{equation*}
\subsection{Causal structure of the protocol}
\label{sec:causal-structure}
The relativistic protocols we consider require Alice (who makes the commitment) and Bob (who receives the commitment) to delegate agents to exchange information at two distant locations labelled by $1$ and $2$. We refer to Alice's (Bob's) agent at the $i\th$ location by $\sA_{i}$ ($\sB_{i}$). Odd (even) rounds take place at Location 1 (2) and the timing is chosen such that every pair of consecutive rounds is space-like separated (see FIG.~\ref{fig:causal-structure}), which means that Alice's message in the $(k + 1)\th$ round must not depend on Bob's message in the $k\th$ round and vice versa. These are the \emph{causal constraints} of the protocol and any actions that do not violate them are allowed. Note that these constraints are \emph{strictly more restrictive} than any form of no-signalling between the two locations. We will see in Section C.4 how these constraints restrict the power of the dishonest party.
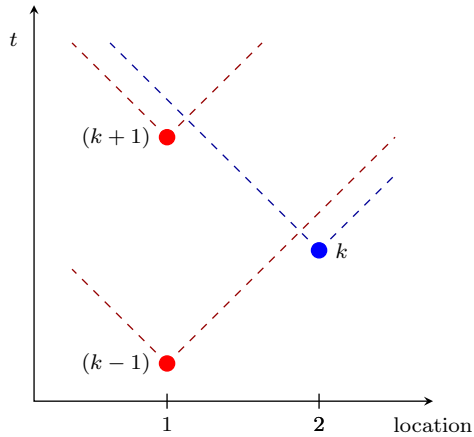
\begin{figure}
\begin{tikzpicture}[scale=1, line width=0.5, font=\footnotesize]
\draw [<->] (1.25, 5.75) -- (1.25, 0.5) -- (6.5, 0.5);
\node[left] at (1.15, 5.3) {$t$};
\draw (3, 0.4) to (3, 0.6);
\node[below] at (3, 0.4) {$1$};
\draw (5, 0.4) to (5, 0.6);
\node[below] at (5, 0.4) {$2$};
\node[below] at (5, 0.4) {$2$};
\node[below] at (6.5, 0.4) {location};
\draw[dashed, color=darkred] (3, 1) to (1.75, 2.25);
\draw[dashed, color=darkred] (3, 1) to (6, 4);
\draw[fill, color=red] (3, 1) circle [radius=0.1];
\node[left] at (2.9, 1) {$(k-1)$};
\draw[dashed, color=darkblue] (5, 2.5) to (2.25, 5.25);
\draw[dashed, color=darkblue] (5, 2.5) to (6, 3.5);
\draw[fill, color=blue] (5, 2.5) circle [radius=0.1];
\node[right] at (5.1, 2.5) {$k$};
\draw[dashed, color=darkred] (3, 4) to (1.75, 5.25);
\draw[dashed, color=darkred] (3, 4) to (4.25, 5.25);
\draw[fill, color=red] (3, 4) circle [radius=0.1];
\node[left] at (2.9, 4) {$(k+1)$};
\end{tikzpicture}
\caption{The protocol requires each pair of consecutive rounds to be space-like separated. Here, we show a valid space-time arrangement for rounds $(k - 1)$, $k$ and $(k + 1)$. Dots represent rounds and dashed lines represent future light cones. Clearly, both pairs ($k - 1,k$) and $(k, k + 1)$ are space-like separated.}
\label{fig:causal-structure}
\end{figure}
\subsection{Security model and definitions}
Here, we describe the class of cheating strategies allowed for the dishonest parties and how to quantify security for the honest ones.
\subsubsection{Honest Alice and dishonest Bob}
\label{sec:honest-alice}
If Alice is honest, she will make an honest commitment to $d$ and Bob should remain completely ignorant about $d$ until the open phase \emph{regardless of his behaviour during the protocol} (for an explanation of the phase structure of relativistic bit commitment schemes see Ref.~\cite{Kaniewski2013a}). More specifically, we define the knowledge of Bob as the knowledge of $\sB_{1}$ and $\sB_{2}$ pooled together. (To see why it is not sufficient to require that \emph{each} agent remains ignorant note that if Alice aborts the protocol immediately before the open phase we want Bob to remain ignorant about $d$ for an indefinite period of time. Clearly, in this setting there is enough time for $\sB_{1}$ and $\sB_{2}$ to combine their knowledge.) Dishonest Bob is limited only by the causal constraints (explained in Section C.1): the messages announced by his classical agents may be arbitrary functions of all messages exchanged in the past (including any randomness generated before the protocol begins available to both $\sB_{1}$ and $\sB_{2}$). Quantum agents are, in addition, allowed to preshare a quantum state (of arbitrary dimension) and then throughout the protocol perform arbitrary measurements on it.

The security definition for honest Alice is based on the \emph{transcript}, which contains complete information about all the messages exchanged in the protocol at both locations and, hence, represents the combined knowledge of $\sB_{1}$ and $\sB_{2}$. Since in our case all the messages are classical the transcript is just a classical random variable denoted by $T$. Perfect security for Alice means that Bob should not be able to extract any information about her commitment. More precisely, we require that the distributions of transcripts are identical (statistically indistinguishable). We say that a protocol is \emph{perfectly hiding} if
\begin{equation*}
\Pr[T = t | d = 0] = \Pr[T = t | d = 1] \quad \forall t
\end{equation*}
for all strategies of dishonest Bob. Note that $d$ is not a random variable, it is Alice's input to the protocol. Therefore, $\Pr[T = t | d = 0]$ should be understood as the probability of seeing the transcript $t$ \emph{given} that Alice has decided to commit to $0$.
\subsubsection{Honest Bob and dishonest Alice}
\label{sec:honest-bob}
If Bob is honest, no strategy of dishonest Alice should allow her to successfully unveil both bits with high probability. Dishonest Alice is, again, limited only by the causal constraints: the messages she announces may be arbitrary functions of all messages exchanged in the past (including any randomness generated before the protocol begins available to both $\sA_{1}$ and $\sA_{2}$). Since we want our protocol to force Alice to become committed in the commit phase we must show that even if Alice decides on the value of the commitment \emph{immediately} after the commit phase she will still fail. Therefore, we consider a model in which her behaviour in the commit phase must be independent of the bit she will attempt to unveil later, denoted by $d$, but the messages exchanged later might depend on it. Given a particular strategy adopted by Alice in the commit phase we define $p_{d}$ to be the optimal probability of successfully unveiling $d$. The protocol is $\varepsilon$-\emph{binding} if
\begin{equation*}
p_{0} + p_{1} \leq 1 + \varepsilon
\end{equation*}
for all strategies of dishonest Alice in the commit phase. Note that this is a weak, non-composable definition of security and that in some other scenarios stronger security definitions can be used~\cite{Kaniewski2013a}.
\subsection{Security of \texttt{sBGKW} scheme against quantum adversaries}
\label{sec:simards-scheme}
The protocol requires $\sA_{1}$ and $\sA_{2}$ to share a secret $n$-bit string, $a \in \{0, 1\}^{n}$, chosen uniformly at random, which is consumed in the protocol.
\begin{enumerate}
\item (commit) $\sB_{1}$ sends $\sA_{1}$ an $n$-bit string, $b \in \{0, 1\}^{n}$, chosen uniformly at random. $\sA_{1}$ returns $d \cdot b \oplus a$ to $\sB_{1}$.
\item (open) $\sA_{1}$ unveils to $\sB_{1}$ the committed bit $d$ while $\sA_{2}$ unveils to $\sB_{2}$ the secret string $a$.
\item (verify) Bob collects data from $\sB_{1}$ and $\sB_{2}$ and accepts the commitment iff the string returned by $\sA_{1}$ in the commit phase equals $d \cdot b \oplus a$.
\end{enumerate}

Security for honest Alice follows from the fact that the knowledge of Bob (more precisely, the knowledge of his two agents pooled together) before the open phase is restricted to one $n$-bit string: the string that $\sB_{1}$ receives in the commit phase, which in the honest scenario equals $d \cdot b \oplus a$. It follows that as long as Alice is honest and chooses $a$ uniformly this string is distributed uniformly regardless of the value of her commitment.

Security for honest Bob against dishonest Alice who is restricted to classical cheating strategies is fairly intuitive: in order for $\sA_{2}$ to be able to unveil both commitments she would need to know both $a$ and $a \oplus b$, which implies that she would know $b$. However, since $b$ is chosen uniformly at random by Bob this must be difficult. This argument can be made rigorous~\cite{crepeau11} to show that the protocol is $\varepsilon$-binding for $\varepsilon = 2^{-n}$ (and this is actually tight: the trivial strategy of always outputting the string of all zeroes gives $p_{0} = 1$ and $p_{1} = 2^{-n}$). Unfortunately, this reasoning does not work against quantum adversaries since it could be the case that $\sA_{2}$ has two distinct measurements (one that reveals $a$ and another one that reveals $a \oplus b$) but since they are incompatible this does not imply anything about her ability to guess $b$.

To find an explicit bound on $p_{0} + p_{1}$ in the quantum setting we formulate cheating as a non-local game in which $\sA_{1}$ receives $b$, $\sA_{2}$ receives $d$ (the bit she is required to unveil, chosen uniformly at random) and the XOR of their outputs is supposed to equal $d \cdot b$. Winning such a game with probability $p_{\textnormal{win}}$ corresponds to a cheating strategy that satisfies $p_{0} + p_{1} = 2 p_{\textnormal{win}}$. More concisely, the rules of the non-local game are
\begin{enumerate}
\item $\sA_{1}$ receives $b \in \{0, 1\}^{n}$, $\sA_{2}$ receives $d \in \{0, 1\}$ (both chosen uniformly at random).
\item $\sA_{1}$ outputs $a_{1} \in \{0, 1\}^{n}$, $\sA_{2}$ outputs $a_{2} \in \{0, 1\}^{n}$ and they win iff $a_{1} \oplus a_{2} = d \cdot b$.
\end{enumerate}
This is exactly the game considered in Ref.~\cite{sikora14} under the name $\textnormal{CHSH}_{n}$%(see top of the right column on page 3)
. They show that
\begin{equation*}
p_{\textnormal{win}}(n) \leq \frac{1}{2} + \frac{1}{\sqrt{2^{n + 1}}},
\end{equation*}
which is sufficient for our purposes as it implies that
\begin{equation*}
p_{0} + p_{1} \leq 1 + \sqrt{2} \cdot 2^{- n / 2}
\end{equation*}
for all strategies of dishonest Alice. Therefore, the protocol is $\varepsilon$-binding with $\varepsilon = 2^{(1 - n) / 2}$ decaying exponentially in $n$ (but note that the decay rate is half of the decay rate against classical adversaries).

The two-round protocol gets mapped onto a non-local game precisely because of the assumption of no communication. More specifically, we require that $\sA_{1}$ outputs the answer outside of the future of $\sA_{2}$ receiving the input and vice versa.
\subsection{A new multi-round protocol based on finite-field arithmetic}
\label{sec:multi-round-protocol}
The protocol presented in Section C.3 implements a bit commitment scheme that is provably secure against quantum adversaries. Unfortunately, the commitment time is limited by $s/c$, where $s$ is the spatial separation between Locations $1$ and $2$ and $c$ is the speed of light. If the two sites are constrained to be on the surface of the Earth then the commitment can only be valid for approximately $42$ milliseconds. Here, we present a new, multi-round scheme which, by adding extra intermediate rounds, allows for an arbitrarily long commitment and we prove its security against classical adversaries. Note, however, that the security guarantee depends on the number of rounds of the protocol (which is proportional to the length of the commitment): the longer the commitment, the more resources (randomness and communication bandwidth) are required to achieve the same level of security.

The protocol consists of $m + 1$ rounds labelled by $k \in [m + 1]$, which obey the causal structure described in Section C.1. The commitment is initiated in the first round ($k = 1$), it is sustained for $k = 2, 3, \ldots, m$ and is eventually opened in the last round ($k = m + 1$). Let us emphasise that we want Alice to become committed as soon as the first round is over and if she were able to decide on the value of her commitment in the second round we would consider that cheating. This is necessary to argue that the commitment really begins in the first round. All the rounds before the open phase ($1 \leq k \leq m$) have the same communication pattern: first $\sB_{i}$ sends an $n$-bit string to $\sA_{i}$ and then she replies with another $n$-bit string. In the last round $\sA_{i}$ sends $\sB_{i}$ a bit (her commitment) and an $n$-bit string (proof of her commitment). We will denote the $n$-bit string announced by Bob (Alice) in the $k\th$ round by $x_{k}$ ($y_{k}$) regardless of whether he/she is honest or not.

Before the protocol begins Alice generates $m$ strings of $n$-bits (private from Bob), denoted by $\{a_{k}\}_{k = 1}^{m}$, drawn independently, uniformly at random from $\{0, 1\}^{n}$, and distributes them to $\sA_{1}$ and $\sA_{2}$. Similarly, Bob generates $m$ strings of $n$-bits (private from Alice), denoted by $\{b_{k}\}_{k = 1}^{m}$, drawn independently, uniformly at random from $\{0, 1\}^{n}$, and distributes them to $\sB_{1}$ and $\sB_{2}$.

The protocol goes as follows
\begin{enumerate}
\item (commit) In the first round $\sB_{1}$ sends $x_{1} = b_{1}$ to $\sA_{1}$ and she replies with $y_{1} = d \cdot x_{1} \oplus a_{1}$. This initiates the commitment.
\item (sustain) In the $k\th$ round (for $2 \leq k \leq m$) $\sB_{i}$ sends $\sA_{i}$ the string $x_{k} = b_{k}$ and she replies with $y_{k} = (x_{k} * a_{k - 1}) \oplus a_{k}$.
\item (open) In the $(m + 1)\th$ round $\sA_{i}$ sends $d$ and $y_{m + 1} = a_{m}$ to $\sB_{i}$.
\item (verify) Bob collects data from $\sB_{1}$ and $\sB_{2}$ and accepts the commitment iff the strings announced by $\sA_{1}$ and $\sA_{2}$ satisfy
\begin{equation}
\begin{aligned}
\label{eq:acceptance-condition}
y_{m + 1} \; = \; y_{m} \; \oplus \; b_{m} * y_{m - 1} \; \oplus \; b_{m} * b_{m - 1} * y_{m - 2} \; \oplus \; \ldots\\
\ldots \; \oplus \; b_{m} * b_{m - 1} * \ldots * b_{2} * y_{1} \; \oplus \; d \cdot b_{m} * b_{m - 1} * \ldots * b_{1}.
\end{aligned}
\end{equation}
\end{enumerate}
Note that the private strings of Alice and Bob as well as the messages they exchange are actually random variables and that is how they must be treated in the analysis.
\subsubsection{Correctness}
It is easy to verify (by induction) that if Alice and Bob follow the protocol then the condition~\eqref{eq:acceptance-condition} is satisfied for any choice of strings $\{a_{k}\}_{k = 1}^{m}$ and $\{b_{k}\}_{k = 1}^{m}$.
\subsubsection{Security for honest Alice}
We start with a lemma which formalises the intuition that if we take an arbitrary random variable taking values in a finite field and perform (finite field) addition with a uniform and uncorrelated random variable then there will be no correlations between the input and the output (or any function thereof). More specifically, in the following lemma $Y$ is a random variable from which the input is generated using function $g$, $X$ is the fresh (finite field) randomness and $h$ is a function allowing us to condition on a certain subset of values of $Y$.
\begin{lem}
\label{lem:random-variables}
Let $\cX = \amsbb{F}_{q}$ and $\cY, \cZ$ be arbitrary finite sets. Let $X$ and $Y$ be two random variables taking values in $\cX$ and $\cY$, respectively, such that $X$ is uniform and independent from $Y$
\begin{equation}
\label{eq:assumption}
\Pr[X = x, Y = y] = q^{-1} \cdot \Pr[Y = y],
\end{equation}
for all $x \in \cX$ and $y \in \cY$. Then for arbitrary functions $g : \cY \to \cX$, $h : \cY \to \cZ$ and arbitrary fixed $x \in \cX$, $z \in \cZ$ it holds that
\begin{equation*}
\Pr[X + g(Y) = x \, | \, h(Y) = z] = q^{-1}.
\end{equation*}
\end{lem}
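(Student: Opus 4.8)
The plan is to prove the statement by a direct computation of the conditional probability, exploiting the independence and uniformity of $X$ together with the fact that finite-field addition by a fixed element is a bijection. The key observation is that, once we fix a value $y$ of $Y$, the quantities $g(y)$ and $h(y)$ become constants, so the event $X + g(Y) = x$ reduces to $X = x - g(y)$, which (for fixed $y$) has probability exactly $q^{-1}$ because $X$ is uniform and independent of $Y$.

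First I would expand the conditional probability using the definition
\begin{equation*}
\Pr[X + g(Y) = x \mid h(Y) = z] = \frac{\Pr[X + g(Y) = x \wedge h(Y) = z]}{\Pr[h(Y) = z]}.
\end{equation*}
Next I would decompose the numerator by summing over all values $y \in \cY$ consistent with $h(y) = z$, writing
\begin{equation*}
\Pr[X + g(Y) = x \wedge h(Y) = z] = \sum_{y : h(y) = z} \Pr[X = x - g(y), Y = y].
\end{equation*}
Here I use that for each fixed $y$ the condition $X + g(y) = x$ is equivalent to $X = x - g(y)$, which is legitimate because additive inverses exist and are unique in $\amsbb{F}_{q}$. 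Applying the independence-and-uniformity assumption~\eqref{eq:assumption} to each term gives $\Pr[X = x - g(y), Y = y] = q^{-1} \Pr[Y = y]$, and crucially this value is $q^{-1}\Pr[Y=y]$ regardless of what $x - g(y)$ actually is, since the hypothesis holds for every element of $\cX$.

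Summing over the relevant $y$ then factors out the $q^{-1}$:
\begin{equation*}
\Pr[X + g(Y) = x \wedge h(Y) = z] = q^{-1} \sum_{y : h(y) = z} \Pr[Y = y] = q^{-1} \Pr[h(Y) = z].
\end{equation*}
Dividing by $\Pr[h(Y) = z]$ yields the claimed value $q^{-1}$ (the statement is vacuous when $\Pr[h(Y)=z]=0$, so we may assume the denominator is positive). I do not anticipate a genuine obstacle here: the only point requiring a little care is recognising that the assumption~\eqref{eq:assumption} is strong enough to handle the \emph{shifted} target $x - g(y)$ uniformly over all $y$, so that the sum collapses cleanly. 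The role of $g$ is merely to relabel which value of $X$ is hit for each $y$, and the role of $h$ is to select a subset of $y$'s; neither can break the uniformity inherited from $X$.
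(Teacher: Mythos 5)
Your proposal is correct and follows essentially the same route as the paper's own proof: decompose $\Pr[X + g(Y) = x,\, h(Y) = z]$ as a sum over $y$ with $h(y) = z$, apply the uniformity-and-independence assumption to each term $\Pr[X = x - g(y),\, Y = y]$, and collapse the sum to $q^{-1}\Pr[h(Y) = z]$. Your additional remark about the vacuous case $\Pr[h(Y)=z]=0$ is a minor point the paper leaves implicit.
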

\begin{proof}
Note that
\begin{align*}
&\Pr[X + g(Y) = x, h(Y) = z] = \sum_{y \in \cY} \Pr[X = x - g(y), h(y) = z , Y = y]\\
&= \sum_{y \in \cY \atop h(y) = z} \Pr[X = x - g(y), Y = y] = \sum_{y \in \cY \atop h(y) = z} q^{-1} \cdot \Pr[Y = y] = q^{-1} \cdot \Pr[h(Y) = z],
\end{align*}
where the second last equality follows from applying the assumption~\eqref{eq:assumption} to every term of the sum.
\end{proof}
\begin{prop}
If Alice is honest then the protocol is perfectly hiding.
\end{prop}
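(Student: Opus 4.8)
The plan is to prove perfect hiding by a round-by-round induction, peeling messages off in the order they are produced and invoking Lemma~\ref{lem:random-variables} at each step. Throughout, $q = 2^{n}$ denotes the size of the field $\amsbb{F}_{2^{n}}$ in which the strings live. Since $d$ is a fixed parameter and not a random variable, I fix an arbitrary $d \in \{0,1\}$ and an arbitrary strategy of dishonest Bob, which I may take to be deterministic: a randomized Bob is a convex mixture of deterministic ones, so it suffices to treat the deterministic case and note that averaging over Bob's private randomness preserves any identity that holds for each fixed value of it. I write the transcript available to Bob \emph{immediately before} the open phase as $T = (x_1, y_1, x_2, y_2, \ldots, x_m, y_m)$, where each $x_k$ is a deterministic function $\beta_k$ of the preceding messages $(x_1, y_1, \ldots, x_{k-1}, y_{k-1})$ (Bob's strategy, giving him the strongest possible position by pooling $\sB_1$ and $\sB_2$) and each $y_k$ is honest Alice's reply. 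Let $T_k$ denote the truncation of $T$ to the first $k$ rounds, with $T_0$ empty. The goal is to show that $\Pr[T = t \mid d]$ does not depend on $d$.

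The key step is to establish, for every $1 \le k \le m$, the recursion
\[
\Pr[T_k = t_k \mid d] = q^{-1}\,\mathbf{1}[\,x_k = \beta_k(t_{k-1})\,]\,\Pr[T_{k-1} = t_{k-1} \mid d].
\]
For a fixed round $k \ge 2$ I apply Lemma~\ref{lem:random-variables} with $X = a_k$ (the fresh secret used in round $k$), with $Y$ collecting all randomness that determines $T_{k-1}$ and $x_k$ --- namely $a_1, \ldots, a_{k-1}$ together with Bob's private randomness --- and with $g(Y) = x_k * a_{k-1}$ and $h(Y) = (T_{k-1}, x_k)$. Because $a_k$ is drawn independently of $a_1, \ldots, a_{k-1}$ and of Bob's randomness, and because $T_{k-1}$ depends only on $a_1, \ldots, a_{k-1}$ (none of the earlier replies $y_1, \ldots, y_{k-1}$ involve $a_k$), the independence hypothesis~\eqref{eq:assumption} holds, and the lemma yields $\Pr[y_k = v_k \mid T_{k-1} = t_{k-1}, d] = q^{-1}$; combining with the deterministic constraint $x_k = \beta_k(t_{k-1})$ gives the recursion. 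The base case $k = 1$ is identical with $X = a_1$ and shift $g(Y) = d \cdot x_1$: here the shift depends on $d$, but uniformity of $a_1$ makes $y_1$ uniform for either value of $d$, which is exactly the content of the lemma. Telescoping from $k = m$ down to $k = 1$ then gives $\Pr[T = t \mid d] = q^{-m}\prod_{k=1}^m \mathbf{1}[\,x_k = \beta_k(t_{k-1})\,]$, an expression manifestly independent of $d$, whence $\Pr[T = t \mid d = 0] = \Pr[T = t \mid d = 1]$ for all $t$, i.e.~perfect hiding.

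The main obstacle to handle carefully is the \emph{coupling} between consecutive rounds: each secret $a_k$ is reused, appearing both in $y_k$ as the additive mask and in $y_{k+1}$ inside the product $x_{k+1} * a_k$. One might worry that conditioning on $T_{k-1}$, which already depends on $a_{k-1}$, destroys the freshness needed to randomize $y_k$. The resolution --- and the reason Lemma~\ref{lem:random-variables} is stated in precisely its present form --- is that the mask of $y_k$ is the string $a_k$, which is genuinely fresh in that it does not occur in any earlier message, whereas the already-exposed $a_{k-1}$ enters only through the shift $g(Y)$, which the lemma allows to be an arbitrary function of the old randomness $Y$. This is exactly why peeling in round order works and why no residual correlation between $y_k$ and the past can leak any information about $d$.
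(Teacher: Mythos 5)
Your proof is correct and follows essentially the same route as the paper's: a round-by-round induction in which Lemma~\ref{lem:random-variables} is invoked with the fresh mask $a_k$ as $X$, the product term $x_k * a_{k-1}$ as the shift $g(Y)$, and the past transcript as $h(Y)$, yielding that each $y_k$ is uniform given the history for either value of $d$. The only (harmless) difference is that you let Bob's $x_k$ depend on $y_{k-1}$, which relaxes the causal constraints and thus proves a marginally stronger hiding statement.
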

\begin{proof}
As explained in Section~\ref{sec:honest-alice} we need to show that the transcripts for $d = 0$ and $d = 1$ after $m$ rounds (immediately before the open phase) are indistinguishable
\begin{equation*}
\Pr[Y_{1} = y_{1}, Y_{2} = y_{2}, \ldots, Y_{m} = y_{m} | d = 0] = \Pr[Y_{1} = y_{1}, Y_{2} = y_{2}, \ldots, Y_{m} = y_{m} | d = 1]
\end{equation*}
for all $y_{1}, y_{2}, \ldots, y_{m}$. In fact, we will show by induction that
\begin{equation}
\label{eq:transcript-distribution}
\Pr[Y_{1} = y_{1}, Y_{2} = y_{2}, \ldots, Y_{t} = y_{t} | d = b] = 2^{-n t},
\end{equation}
for all $t \in [m]$ regardless of the value of $b \in \{0, 1\}$, which clearly satisfies the indistinguishability condition.

Honest Alice will follow the protocol, which means that $\{A_{k}\}_{k = 1}^{m}$ are drawn independently, uniformly at random from $\{0, 1\}^{n}$, the value of her commitment is $d$ and then Alice's message in $k\th$ round is
\begin{equation}
\label{eq:yk}
Y_{k} =
\begin{cases}
d \cdot X_{1} \oplus A_{1} &\nbox{for} k = 1,\\
Y_{k} = (X_{k} * A_{k - 1}) \oplus A_{k} &\nbox{for} 2 \leq k \leq m.
\end{cases}
\end{equation}
Bob, on the other hand, is only limited by the causal constraints, which means that his message in the $k\th$ round might depend on some randomness preshared between $\sB_{1}$ and $\sB_{2}$, denoted by $R_{B}$, and all the responses of Alice which belong to the past of the $k\th$ round. Therefore, without loss of generality his message in the $k\th$ round is
\begin{equation}
\label{eq:xk}
X_{k} = f_{k}(R_{B}, Y_{1}, Y_{2}, \ldots Y_{k - 2})
\end{equation}
for some arbitrary function $f_{k}$ (we include all randomness used by Bob in $R_{B}$ so $f_{k}$ is deterministic).

In this scenario the full transcript is a deterministic function of Alice's commitment $d$, her private randomness $\{A_{k}\}_{k = 1}^{m}$ and Bob's preshared randomness $R_{B}$. For every string announced by Alice and Bob we can explicitly find the subset of random variables it may depend on as listed in the table below
\begin{center}
%
%\centering
\setlength{\tabcolsep}{0.6cm}
\begin{tabular}{c c}
\textnormal{message} & \textnormal{random variables it might depend on}\\
\hline
$X_{1}$ & $R_{B}$\\
$X_{2}$ & $R_{B}$\\
$X_{3}$ & $d, R_{B}, A_{1}$\\
\vdots & \vdots\\
$X_{k}$ & $d, R_{B}, A_{1}, A_{2}, \ldots, A_{k - 2}$\\
\vdots & \vdots\\
$X_{m}$ & $d, R_{B}, A_{1}, A_{2}, \ldots, A_{m - 2}$\\
$Y_{1}$ & $d, R_{B}, A_{1}$\\
$Y_{2}$ & $d, R_{B}, A_{1}, A_{2}$\\
$Y_{3}$ & $d, R_{B}, A_{1}, A_{2}, A_{3}$\\
\vdots & \vdots\\
$Y_{k}$ & $d, R_{B}, A_{1}, A_{2}, \ldots, A_{k}$\\
\vdots & \vdots\\
$Y_{m}$ & $d, R_{B}, A_{1}, A_{2}, \ldots, A_{m}$\\
\end{tabular}
\end{center}
First, we verify that Eq.~\eqref{eq:transcript-distribution} holds for $t = 1$
\begin{equation*}
\Pr[Y_{1} = y_{1} | d = b] = \Pr[ b \cdot X_{1} \oplus A_{1} = y_{1}] = \Pr[ b \cdot f_{1}(R_{B}) \oplus A_{1} = y_{1}] = 2^{-n},
\end{equation*}
where the first two equalities follow from Eqs.~\eqref{eq:yk} and \eqref{eq:xk}, respectively. The last equality is a direct consequence of Lemma~\ref{lem:random-variables} (in a simplified form: no conditioning) applied to $X = A_{1}$, $Y = (b, R_{B})$, $g(Y) = b \cdot f_{1}(R_{B})$. Now, suppose that Eq.~\eqref{eq:transcript-distribution} holds for $t = k$. Then
\begin{align*}
\Pr&[Y_{1} = y_{1}, \ldots, Y_{k + 1} = y_{k + 1} | d = b]\\
&= \Pr[Y_{k + 1} = y_{k + 1} | d = b, Y_{1} = y_{1}, \ldots, Y_{k} = y_{k}] \cdot \Pr[Y_{1} = y_{1}, \ldots, Y_{k} = y_{k} | d = b]\\
&= \Pr[(X_{k + 1} * A_{k}) \oplus A_{k + 1} = y_{k + 1} | d = b, Y_{1} = y_{1}, \ldots, Y_{k} = y_{k}] \cdot 2^{-n k}\\
&= 2^{-n} \cdot 2^{-n k} = 2^{- (n + 1) k},
\end{align*}
where the second last inequality follows from applying Lemma~\ref{lem:random-variables} to
\begin{align}
X &= A_{k + 1},\nonumber\\
Y &= (b, R_{B}, A_{1}, \ldots, A_{k}),\nonumber\\
\label{eq:gy}
g(Y) &= X_{k + 1} * A_{k},\\
h(Y) &= (Y_{1}, Y_{2}, \ldots, Y_{k})\nonumber.
\end{align}
Note that it is not immediately obvious and the reader should verify (using the table presented above) that the quantities on the right-hand side of Eq.~\eqref{eq:gy} are functions of $Y$ alone, and therefore satisfy the assumptions of the lemma. This shows that Eq.~\eqref{eq:transcript-distribution} holds for $t = k + 1$ and so by induction it must hold for all $t \in [m]$. Therefore, even just before the open phase the transcript contains no information about Alice's commitment and the protocol is perfectly hiding.
\end{proof}
\subsubsection{Security for honest Bob}
\begin{prop}
If Bob is honest then the protocol is $\varepsilon$-binding for $\varepsilon = \omega_{m}$ defined in Eq.~\eqref{eq:omega-definition}.
\end{prop}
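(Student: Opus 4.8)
The plan is to reduce dishonest Alice's sequential cheating to the parallel ``Number on the Forehead'' game of Section~\ref{sec:multiplayer-game}, so that the bound $\omega_m$ applies directly. First I would fix Alice's commit-phase strategy (the way $\sA_1$ produces $y_1$, which by assumption is independent of $d$) and consider the two \emph{counterfactual} runs in which she subsequently tries to open $d = 0$ and $d = 1$. Because honest Bob sends the pregenerated strings $x_k = b_k$ irrespective of Alice's behaviour, both runs can be analysed on a \emph{common} probability space: the inputs $b_1, \ldots, b_m$ together with whatever randomness $\sA_1, \sA_2$ preshare. Writing $E_0$ and $E_1$ for the events that the respective opening is accepted, we have $p_0 + p_1 = \Pr[E_0] + \Pr[E_1] = \Pr[E_0 \vee E_1] + \Pr[E_0 \wedge E_1] \le 1 + \Pr[E_0 \wedge E_1]$, so it suffices to show $\Pr[E_0 \wedge E_1] \le \omega_m$.

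For the reduction I would write the acceptance condition~\eqref{eq:acceptance-condition} for each of the two runs, denoting by $y_k^{(d)}$ Alice's $k\th$ message when aiming for $d$, and set $z_k := y_k^{(0)} \oplus y_k^{(1)}$. The commit-phase assumption gives $z_1 = 0$. Taking the XOR of the two acceptance equations makes the common terms cancel and leaves a single linear relation whose inhomogeneous part is exactly $b_m * b_{m-1} * \ldots * b_1$; after dropping the vanishing $z_1$ term this reads
\begin{equation*}
\prod_{i=1}^{m} b_i = z_{m+1} \; \oplus \; \bigoplus_{j=2}^{m} \Big( \prod_{i=j+1}^{m} b_i \Big) * z_j .
\end{equation*}
I would then regroup the right-hand side into $m$ pieces $Y_1, \ldots, Y_m$ by assigning the term carrying $z_{j}$ to ``player'' $j-1$ and the lone $z_{m+1}$ to player $m$, so that the relation becomes precisely the winning condition $\prod_i b_i = \bigoplus_k Y_k$ of~\eqref{eq:winning-condition} with $q = 2^n$ and $b_k$ playing the role of $X_k$.

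The decisive ingredient is the causal structure of Section~\ref{sec:causal-structure}: since consecutive rounds are space-like separated, $y_k^{(d)}$ cannot depend on $x_{k-1} = b_{k-1}$, and $y_{m+1}^{(d)}$ cannot depend on $x_m = b_m$. Hence each difference $z_k$ is a function of $\{b_i : i \neq k-1\}$ (and of the shared randomness), so in particular $z_{m+1}$ is independent of $b_m$. Feeding this into the grouping above shows that each $Y_k$ depends only on $\{b_i : i \neq k\}$, i.e.\ it is a legitimate strategy in the forehead model. Consequently $E_0 \wedge E_1$ \emph{implies} that this induced strategy wins the $m$-player game, giving $\Pr[E_0 \wedge E_1] \le \omega_m$ and therefore $p_0 + p_1 \le 1 + \omega_m$.

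I expect the main obstacle to be the bookkeeping that certifies the forehead structure: one must check that pairing the coefficient product $\prod_{i=j+1}^m b_i$ with $z_j$ lands the whole term in player $j-1$ and that this term is genuinely free of $b_{j-1}$, using both the space-like constraint (which removes $b_{j-1}$ from $z_j$) and the index range of the product (which removes it from the coefficient). A secondary point is to justify that preshared randomness cannot raise the winning probability above $\omega_m$, which follows from the deterministic optimality of classical forehead strategies noted in Section~\ref{sec:multiplayer-game}; and one should record explicitly that honest Bob's messages are identical in the two runs, which is what legitimises analysing $E_0$ and $E_1$ on one probability space.
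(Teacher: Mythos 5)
Your proposal is correct and follows essentially the same route as the paper's proof: bound $p_0+p_1$ by $1+\Pr[E_0\wedge E_1]$, XOR the two acceptance conditions so that the $d$-independent commit-round message cancels, and use the space-like-separation constraint to recognise the resulting relation as an $m$-player ``Number on the Forehead'' strategy bounded by $\omega_m$. The player assignment (term carrying $z_j$ to player $j-1$, $z_{m+1}$ to player $m$) and the treatment of preshared randomness match the paper's argument.
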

\begin{proof}
Honest Bob will follow the protocol so $X_{k} = B_{k}$, where $\{B_{k}\}_{k = 1}^{m}$ are drawn independently, uniformly at random from $\{0, 1\}^{n}$. Let $R_{A}$ be any randomness preshared by $\sA_{1}$ and $\sA_{2}$ before the protocol begins. The most general cheating strategy for Alice allowed by the security model described in Section~\ref{sec:honest-bob} is a collection of (deterministic) functions, $\{f_{k}\}_{k = 1}^{m + 1}$, all of which output an $n$-bit string while their inputs are as described below.
\begin{itemize}
\item Alice's message in the commit phase might depend on the preshared randomness and the first message of Bob
\begin{equation*}
Y_{1} = f_{1}(R_{A}, B_{1}).
\end{equation*}
\item Alice's messages during the sustain phase ($k \in \{2, 3, \ldots, m\}$) might depend on the preshared randomness, Bob's messages from the past and the bit she is trying to unveil $d$\\
\begin{equation*}
Y_{k} = f_{k}(R_{A}, B_{1}, B_{2}, \ldots, B_{k - 2}, B_{k}, d).
\end{equation*}
Note that $Y_{k}$ must not depend on $B_{k - 1}$ because, by assumption, it does not belong to the past of the $k\th$ round.
\item Alice's message in the open phase might depend on the preshared randomness, Bob's messages from the past and the bit she is trying to unveil $d$\\
\begin{equation*}
%'
Y_{m + 1} = f_{m + 1}(R_{A}, B_{1}, B_{2}, \ldots, B_{m - 1}, d).
\end{equation*}
Again, $Y_{m + 1}$ must not depend on $B_{m}$.
\end{itemize}
The commitment to $d$ will be accepted iff~\eqref{eq:acceptance-condition} is satisfied for that value of $d$ and let us denote this event by $H_{d}$. By definition $p_{d} = \Pr[H_{d}]$ and since both events are defined over $(R_{A}, B_{1}, B_{2}, \ldots, B_{m})$ it is meaningful to talk about $H_{0} \lor H_{1}$ and $H_{0} \wedge H_{1}$. (Note that this reasoning does not work in the quantum setting since $H_{0}$ and $H_{1}$ are \emph{not defined} simultaneously.) To bound $p_{0} + p_{1}$ we use $\Pr[H_{0}] + \Pr[H_{1}] = \Pr[H_{0} \lor H_{1}] + \Pr[H_{0} \wedge H_{1}] \leq 1 + \Pr[H_{0} \wedge H_{1}]$. The event $H_{0} \wedge H_{1}$ happens if~\eqref{eq:acceptance-condition} is satisfied for both values of $d$. Define $K$ to be the event that the XOR of the two conditions (i.e.~ Eq.~\eqref{eq:acceptance-condition} for $d = 0$ and $d = 1$) is satisfied
\begin{align*}
K \iff B_{1} * B_{2} * \ldots * B_{m} &= g_{m + 1}(R_{A}, B_{1}, B_{2}, \ldots, B_{m - 1}) \oplus g_{m}(R_{A}, B_{1}, B_{2}, \ldots, B_{m - 2}, B_{m})\\
&\bigoplus_{k = 2}^{m - 1} B_{m} * B_{m - 1} * \ldots * B_{k + 1} * g_{k}(R_{A}, B_{1}, B_{2}, \ldots, B_{k - 2}, B_{k}),
\end{align*}
where $g_{k}(R_{A}, B_{1}, B_{2}, \ldots, B_{k - 2}, B_{k}) = f_{k}(R_{A}, B_{1}, B_{2}, \ldots, B_{k - 2}, B_{k}, d = 0) \oplus f_{k}(R_{A}, B_{1}, B_{2}, \ldots, B_{k - 2}, B_{k}, d = 1)$. Note that since $H_{0} \wedge H_{1} \implies K$ we have $\Pr[H_{0} \wedge H_{1}] \leq \Pr[K]$.

To bound $\Pr[K]$ note that the right-hand side contains exactly $m$ terms, but each of them depends on $(m - 1)$ $B$'s; none of the terms depends on \emph{all $B$'s simultaneously}. The terms corresponding to $2 \leq k \leq m - 1$ have some internal structure (e.g.~the dependence on $B_{m}$ is \emph{not} arbitrary) but we can relax the problem to the case where the $k\th$ term is an arbitrary function of all the $B$'s except for $B_{k}$ denoted by $h_{k}$. The winning condition for the relaxed game is
\begin{equation*}
B_{1} * B_{2} * \ldots * B_{m} = \bigoplus_{k = 1}^{m} h_{k}( B_{ [m] \setminus \{k\} } )
\end{equation*}
and clearly the winning probability is an upper bound on $\Pr[K]$. In Section B.2 we define the optimal winning probability for this game to be $\omega_{m}$. This concludes the proof since
\begin{equation*}
p_{0} + p_{1} \leq 1 + \Pr[K] \leq 1 + \omega_{m}.
\end{equation*}
\end{proof}
Note that a non-trivial upper bound on $\omega_{m}$ (for an arbitrary $m$) can be obtained using a recursive argument presented in Section B.4.
\end{document}